\documentclass[draftclsnofoot,10pt,onecolumn,letterpaper]{IEEEtran}

\usepackage{amsmath,amsfonts,amssymb,amsbsy,url,verbatim}
\usepackage{pstricks,psfrag,theorem}
\usepackage{times}
\usepackage[english]{babel}
\usepackage[dvips]{graphicx}
\usepackage{bm}
\usepackage{comment}
\usepackage{stfloats}
\usepackage{cite}
\usepackage{dsfont}
\usepackage{url}
\usepackage{amsmath}
\usepackage{subfigure}

\newtheorem{propos}{Proposition}

\newcommand{\bi}{\begin{itemize}}
\newcommand{\ei}{\end{itemize}}
\newcommand{\ben}{\begin{enumerate}}
\newcommand{\een}{\end{enumerate}}
\newcommand{\bc}{\begin{cases}}
\newcommand{\ec}{\end{cases}}
\newcommand{\bd}{\begin{description}}
\newcommand{\ed}{\end{description}}

\newcommand{\be}{\begin{equation}}
\newcommand{\ee}{\end{equation}}
\newcommand{\bea}{\begin{eqnarray}}
\newcommand{\eea}{\end{eqnarray}}
\newcommand{\me}{\mathrm{e}}

\newcommand{\md}{\mathrm{d}}

\newcommand{\mb}{\mathbf}
\newcommand{\mbb}{\mathbb}

\newcommand{\mr}{\mathrm}

\newcommand{\nn}{\nonumber}
\newcommand{\figw}{\columnwidth}


\graphicspath{%
     {./Figures/}
}

\begin{document}

\title{Performance analysis of an opportunistic relay selection protocol for multi-hop networks \\ {\huge Technical report}
}

\author{Kostas~Stamatiou,~Davide~Chiarotto,~Federico~Librino~and~Michele~Zorzi \\ Department of Information Engineering, University of Padova}

\maketitle

\begin{abstract}
In this technical report, we analyze the performance of an interference-aware opportunistic relay selection protocol for multi-hop line networks which is based 
on the following simple rule: a node always transmits if it has a packet, except when its successive node on the line is transmitting.
We derive analytically the saturation throughput and the end-to-end delay for two and three hop networks, and present simulation results for higher 
numbers of hops. In the case of three hops, we determine the throughput-optimal relay positions.
\end{abstract}


\section{Introduction}
\label{Sec:intro}

Opportunistic routing in multi-hop wireless networks takes advantage of favorable channel conditions in order 
to advance packets over a large distance, thus reducing the end-to-end packet delay. 
In this paper, we consider a  line network consisting of a source, a number of relays and a destination, and
evaluate the performance of an ``interference-aware'' opportunistic relaying protocol; namely, a node always attempts to transmit a packet to the farthest node down the line, except when its successive node is transmitting, in which case it stays silent.
The rationale of the protocol is simple: if two consecutive nodes transmit, it is unlikely that the transmission of the node farther from the destination (FAR) will be successful, due to the strong interference generated from the node closer to the destination (CLOSE); therefore, FAR stays silent in order to avoid interfering with the transmission of CLOSE. 

In the case of a two-hop system, we derive exact expressions for the saturation throughput and the mean end-to-end delay, while
in the case of three hops, we obtain an exact expression for the saturation throughput and an accurate approximation for the 
mean end-to-end delay. The analysis takes fully into account the interaction between the source and relay queues and is based on a generating function approach employed in early work on packet radio networks~\cite{Sidi83}. To the best of our knowledge, these are the first analytical expressions for the delay and throughput of tandem queueing networks with opportunistic routing and a realistic underlying physical layer model that takes into account fading and interference. 

We provide numerical and simulation results for a path-loss and Rayleigh fading channel. In particular, for the case of three hops, 
we determine the relay positions that maximize the saturation throughput. Overall, for typical values of the average link signal-to-noise ratio (SNR), the throughput gain of the considered protocol with respect to an aggressive opportunistic relaying protocol is 10-15$\%$, and even larger with respect to a TDMA protocol. Simulation results for four and five hop systems exhibit similar performance gains. 

Early work on tandem queueing networks~\cite{Sidi87} relied on simplified channel and interference modeling and 
did not consider direct packet transmissions over the distance of multiple hops. Recent work on  
opportunistic routing includes~\cite{Zorzi03,Biswas05,chiarotto10}. Various aspects of line networks have been studied 
in~\cite{Appuswamy10,Vellambi11,Ikki2012}, while~\cite{Chen2012} calculated the end-to-end throughput of dynamic relay selection in a random geometric setting.

\section{System model}
\label{Sec:sysMod}

We consider a slotted-time system, where slot $t \in \mbb{Z}$ is the time interval $[t,t+1)$, and the slot duration, i.e., unity, is equal to the duration of a packet.
The system consists of $N+1$ nodes, i.e., the source, $N-1$ relays and the destination. 
At the end of each slot, a new packet arrives at the end of the source queue with probability $\lambda$ and arrivals are independent across slots (other arrival distributions can also be accommodated by the analysis). The buffer size at the source is infinite. According to the considered protocol, a node transmits its head-of-line packet in slot $t$, if 
its successive node does not transmit in that slot (the last relay always transmits since the destination acts as a sink). 
The packet is kept at the farthest receiver that successfully receives the packet, and is discarded by all others.
If the packet is not successfully received by any receiver, it remains at the head-of-line. 
Finally, we assume that nodes can not transmit and receive simultaneously.
 
For analytical purposes, we make the following assumptions: 
\begin{itemize}
 \item (A1) A packet can cover the distance of at most two hops;
 \item (A2) interference from a transmitter more than two hops away from a receiving node is negligible;
 \item (A3) the buffer size at the relays is unity.
\end{itemize}
(A1) and (A2) are based on the fact that, in terrestial networks, the signal power decreases quickly with distance due to the large path-loss exponent. Therefore, a direct three-hop transmission is highly unlikely for typical SNR values and the interference from far-away transmitters is close to negligible.
These statements are also justified by the simulation results of Section~\ref{Sec:numRes}. Regarding (A3), a relay buffer size larger than unity is unnecessary for $N=2$, since, by virtue of the protocol, the only relay will always transmit if it has a packet, thus it can not receive. For $N\ge3$, a buffer size larger than unity 
could enable a relay to receive a packet in the event that its successive relay transmits. Nevertheless, in Section~\ref{Sec:numRes}, it is demonstrated via simulation that
the protocol performance is insensitive to the relay buffer size for three, four and five hop systems.

Let the numbers $1,\dots,N$ correspond to the source, the first relay,$\dots$, the $(N-1)^{\mr{th}}$ relay. 
In the absence of interference, we denote the probability that a transmission of node $n$ succeeds in covering two hops and one hop as $p_{20,n}$ and $p_{10,n}$, where $p_{10,n}>p_{20,n}$.
(we set $p_{20,N}=0$ since the last relay cannot perform two-hop transmissions). 
The probability that a packet covers at least one hop is $p_{s,n} = p_{10,n} + (1-p_{10,n})p_{20,n}$ (where the subscript
``s'' stands for ``success''). We also define $p_{11,n}, n =1,\dots,N-2$ as the probability of successful reception over a single hop, in the presence of 
interference from transmitter $n+2$, i.e., one hop away from the receiver of $n$. 
Henceforth, we employ the following notation. The complement of $x$ is $\bar{x}$, i.e., $\bar{x} = 1-x$; the derivative of the function $f_x$ is $f'_x$ or $\left. \frac{\md f_x}{\md x} \right|_{x=x_o} = f'_{x_o}$; the double derivative of $f_x$ is $f''_x$ or $\left. \frac{\md^2 f_x}{\md x^2} \right|_{x=x_o} = f''_{x_o}$; and the determinant of matrix $\mb{A}$ is $|\mb{A}|$.

\section{Analysis}
\label{Sec:anal}

Let $Q_1(t),Q_2(t),\dots,Q_{N}(t)$ denote the number of packets at the source, the $1^{\mr{st}}$ relay,$\dots$, the $(N-1)^{\mr{th}}$ relay, respectively.
Since the relay buffer size is unity, $Q_{n}(t) \in \{0,1\}$ for $n = 2,\dots,N$, while, for the source, $Q_1(t) \in \mbb{N}$.
In steady state, the probability generating function (pgf) of the vector $(Q_1(t),\dots,Q_{N}(t))$ is
\begin{equation}
 \label{eq:genFun}
  g_{x_1 x_2 \dots x_{N}} = \mbb{E} \left[ x_1^{Q_1(t)} x_2^{Q_2(t)} \dots x_{N}^{Q_{N}(t)} \right],
\end{equation}
where $(x_1,\dots,x_N) \in [0,1]^N$ is the argument of the pgf. 
Moreover, let $A(t)$ be a Bernoulli random variable with parameter $\lambda$ which represents the arrival (or not)
of a new packet at the source at the end of slot $t$. Then, the pgf of $A(t)$ is $f_{x_1} = \mbb{E}\left[x_1^{A(t)}\right] = \lambda x_1 + \bar{\lambda}$.

The {\em mean end-to-end delay} is calculated as~\cite{Sidi87}
\begin{equation}
 \label{eq:meane2eDelay}
 D = \frac{1}{\lambda} \sum_{n=1}^{N} \left. \frac{\partial g_{x_1\dots x_N}}{\partial x_n} \right|_{x_1=\dots=x_N=1}.
\end{equation}
The {\em saturation throughput} $\tau_s$ is defined as the minimum value of $\lambda$ for which $D$ becomes infinite.

\subsection{Two-hop network ($N=2$)}

In steady state, $g_{x_1x_2}$ satisfies the equation
\begin{align}
 \label{eq:mainFunEq2HopsDef}
 g_{x_1x_2} &= \mbb{E}\left[x_1^{Q_1(t+1)} x_2^{Q_2(t+1)} \right] \nn \\
            &= \mbb{E}\left[x_1^{Q_1(t+1)} x_2^{Q_2(t+1)} \left( \mbb{I}\left(Q_1(t) =0,Q_2(t) =0 \right) + \mbb{I}\left(Q_1(t) >0,Q_2(t) =0 \right) \right. \right. \nn \\
&+ \mbb{I}\left(Q_1(t) =0,Q_2(t) >0 \right) + \left. \left. \mbb{I}\left(Q_1(t) >0,Q_2(t) >0 \right) \right) \right],
\end{align}
where $\mbb{I}(\cdot)$ is the indicator function. From (\ref{eq:mainFunEq2HopsDef}) and (\ref{eq:genFun}),
it follows that $g_{x_1x_2}$ must satisfy the functional equation
\begin{align}
 \label{eq:mainFunEq2HopsInitial}
 g_{x_1x_2} &= f_{x_1} \left[ g_{00} + \left( p_{10}\bar{p}_{20}\frac{x_2}{x_1} + \frac{p_{20}}{x_1} + \bar{p}_{10}\bar{p}_{20} \right) \left(g_{x_10}-g_{00}\right) 
              + \left( \frac{p_{10}}{x_2} + \bar{p}_{10} \right) \left( g_{x_1x_2} - g_{x_10} \right) \right].
\end{align}
The first, second and third terms in the brackets corresponds to the following events: both the source and relay are empty; only the 
source is non-empty, thus a packet advances directly to the destination with probability $p_{20}$, or to the relay with probability $p_{10}\bar{p}_{20}$, or
to neither with probability $\bar{p}_{10}\bar{p}_{20}$; the relay is non-empty (and the source is either empty, or remains silent if it is non-empty), thus the packet
transmission to the destination succeeds with probability $p_{10}$ or fails with probability $\bar{p}_{10}$.

In Proposition~\ref{prop:delay2Hop}, we derive the delay and saturation throughput of a symmetrical two-hop network, i.e., $p_{10,1}=p_{10,2}=p_{10}$.
Since only the source can perform a two-hop transmission, for simplicity we write $p_{20,1}=p_{20}$. 

\begin{propos}
 \label{prop:delay2Hop}
For a symmetrical two-hop network, the mean end-to-end delay is
\begin{equation}
 \label{eq:meanDelay2Hop}
D = \frac{1-\lambda\left(1 - \frac{\bar{p}_{10}\bar{p}_{20}}{p_{10}} \right)}{p_s-\lambda(1+\bar{p}_{20})} + \frac{\bar{p}_{20}}{p_s},
\end{equation}
and the saturation throughput is
\begin{equation}
 \label{eq:satThru2Hop}
 \tau_s = \frac{p_s}{1 + \bar{p}_{20}}.
\end{equation}
\end{propos}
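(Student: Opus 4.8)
The plan is to solve the functional equation (\ref{eq:mainFunEq2HopsInitial}) for the unknown generating functions and then extract the two first moments that (\ref{eq:meane2eDelay}) requires. The key structural observation is that, because the relay buffer has unit size, $Q_2(t)\in\{0,1\}$ and hence $g_{x_1x_2}$ is \emph{affine} in $x_2$: writing $g_{x_1x_2}=g_{x_10}+x_2\,h_{x_1}$ with $h_{x_1}=\mbb{E}[x_1^{Q_1(t)}\mbb{I}(Q_2(t)=1)]$, the difference $g_{x_1x_2}-g_{x_10}=x_2\,h_{x_1}$ cancels the apparent $1/x_2$ pole in the last bracket of (\ref{eq:mainFunEq2HopsInitial}), since $(p_{10}/x_2+\bar p_{10})\,x_2 h_{x_1}=(p_{10}+\bar p_{10}x_2)h_{x_1}$. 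Substituting this form into (\ref{eq:mainFunEq2HopsInitial}) and matching the coefficients of $x_2^0$ and $x_2^1$ yields two functional relations among the three unknowns $g_{x_10}$, $h_{x_1}$ and the scalar $g_{00}$.

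Next I would eliminate $h_{x_1}$ using the $x_2^1$ equation, which gives it linearly in terms of $g_{x_10}-g_{00}$, and substitute back into the $x_2^0$ equation. This leaves a single relation of the form $g_{x_10}\,(1-C_{x_1})=g_{00}\,(f_{x_1}-C_{x_1})$ for an explicit rational coefficient $C_{x_1}$, so that $g_{x_10}=g_{00}\,(f_{x_1}-C_{x_1})/(1-C_{x_1})$ is determined up to the single constant $g_{00}$. Clearing the denominators $x_1$ and $1-\bar p_{10}f_{x_1}$ turns this into a ratio of two low-degree polynomials in $x_1$.

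The delicate point is that at $x_1=1$ one has $f_{x_1}=1$ and $C_{x_1}=1$, so numerator and denominator both vanish and $g_{x_10}$ takes the $0/0$ form there; this same degeneracy is what produces the saturation threshold. I would resolve it by Taylor-expanding numerator and denominator about $x_1=1$. A first-order (L'Hopital) expansion gives $g_{10}=\Pr(Q_2=0)$ and, through the $x_2^1$ relation evaluated at $x_1=1$, $h_1=\Pr(Q_2=1)=\mbb{E}[Q_2]=\bar p_{20}(g_{10}-g_{00})$; imposing the normalization $g_{10}+h_1=1$ then pins down $g_{00}$ and $g_{10}$ in closed form. To obtain $\mbb{E}[Q_1]=\left.\partial_{x_1}g_{x_1x_2}\right|_{x_1=x_2=1}=\left.\partial_{x_1}g_{x_10}\right|_{x_1=1}+\left.\partial_{x_1}h_{x_1}\right|_{x_1=1}$, I would push the expansion of the $0/0$ ratio to second order, which is where the $f''_{x_1}$ terms of the notation enter; this second-order expansion, requiring the coefficients of $(1-x_1)^2$ in both polynomials, is the main computational obstacle.

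Finally I would assemble the delay from (\ref{eq:meane2eDelay}) as $D=(\mbb{E}[Q_1]+\mbb{E}[Q_2])/\lambda$ and simplify, using $p_s=p_{10}+\bar p_{10}p_{20}$, to reach (\ref{eq:meanDelay2Hop}); as a check, at $\lambda\to0$ this should reduce to $(1+\bar p_{20})/p_s$, the single-packet traversal time. The saturation throughput (\ref{eq:satThru2Hop}) then follows immediately: $\mbb{E}[Q_1]$ (and hence $D$) diverges precisely when the denominator $p_s-\lambda(1+\bar p_{20})$ of the first term of (\ref{eq:meanDelay2Hop}) vanishes, i.e. at $\lambda=\tau_s=p_s/(1+\bar p_{20})$, which is the largest arrival rate for which the expansion yields a finite, genuinely stochastic solution.
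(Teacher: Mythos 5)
Your proposal is correct and follows essentially the same route as the paper's proof: exploit the affine dependence on $x_2$ (the paper writes $g_{x_1x_2}=g_{x_10}+(g_{x_11}-g_{x_10})x_2$, identical to your $h_{x_1}=g_{x_11}-g_{x_10}$), reduce the functional equation to a linear system by evaluating at $x_2\in\{0,1\}$, solve up to the constant $g_{00}$, fix $g_{00}$ by the normalization at $x_1=1$ via de l'H\^{o}pital, and extract $\mbb{E}[Q_1]$ and $\mbb{E}[Q_2]$ by a second-order expansion to assemble $D$ and read off $\tau_s$ from the vanishing denominator. The only (cosmetic) difference is bookkeeping: the paper solves the $2\times2$ system for $(g_{x_10},g_{x_11})$ and imposes $\lim_{x_1\to1}g_{x_11}=1$ directly, whereas you eliminate $h_{x_1}$ first and impose $g_{10}+h_1=1$ together with the exact relation $h_1=\bar{p}_{20}(g_{10}-g_{00})$, which checks out.
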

\begin{proof}
Since the size of the relay buffer is unity, from (\ref{eq:genFun}), we have that
\begin{equation}
 \label{eq:genFunLinear2Hop}
 g_{x_1x_2} = g_{x_10} + \left(g_{x_11} - g_{x_10} \right)x_2.
\end{equation}
Substituting in (\ref{eq:mainFunEq2Hops}), we obtain
\begin{align}
 \label{eq:mainFunEq2Hops}
 g_{x_1x_2}f_{x_1}^{-1} &=  g_{00} + \left( p_{10} \bar{p}_{20}x_2 x_1^{-1} + p_{20}x_1^{-1} + \bar{p}_{10}\bar{p}_{20} \right) 
                                    \left(g_{x_10}-g_{00}\right) + \left( p_{10} + \bar{p}_{10}x_2 \right) \left( g_{x_11} - g_{x_10} \right) .
\end{align}
Letting $x_2=\{0,1\}$ in (\ref{eq:mainFunEq2Hops}) yields
\begin{align*}
 g_{x_10} \left(1-f_{x_1}\left(p_{20}x_1^{-1}+\bar{p}_{10}\bar{p}_{20} - p_{10}\right) \right) - g_{x_11}f_{x_1}p_{10} &= g_{00}f_{x_1}\left(1-p_{20}x_1^{-1}-\bar{p}_{10}\bar{p}_{20} \right)  \\
 g_{x_10}f_{x_1}p_s (x_1^{-1}-1 ) + g_{x_11}(f_{x_1}-1) &= g_{00}f_{x_1}p_s(x_1^{-1}-1). 
\end{align*}
Solving this system of equations with respect to $g_{x_10},g_{x_11}$, we obtain
\begin{align}
 g_{x_10} &= g_{00} f_{x_1} \frac{x_1(f_{x_1}-1) + p_{10}p_s(1-x_1)f_{x_1} - (p_{20}+\bar{p}_{10}\bar{p}_{20}x_1)(f_{x_1}-1)}{x_1(f_{x_1}-1)+p_{10}p_s(1-x_1)f_{x_1}^2-
                                \left(p_{20}+ (\bar{p}_{10}\bar{p}_{20}-p_{10})x_1 \right)f_{x_1}(f_{x_1}-1)} 
\label{eq:mainFunEq2HopsProof0}\\
 g_{x_11} &= g_{00}p_s\frac{f_{x_1}}{x_1} \frac{(1-x_1)\left(x_1(1-f_{x_1}) + p_{10}x_1f_{x_1}\right)}{x_1(f_{x_1}-1) + p_{10}p_s(1-x_1)f_{x_1}^2 -
\left(p_{20}+ (\bar{p}_{10}\bar{p}_{20}-p_{10})x_1 \right)f_{x_1}(f_{x_1}-1)}
\label{eq:mainFunEq2HopsProof1}
\end{align}
Letting $x \to 1$ in (\ref{eq:mainFunEq2HopsProof1}) and applying de l'H\^{o}pital's rule results in
\begin{equation*}
\label{eq:probOfEmpty2Hops}
g_{00} = 1-\lambda\frac{1+\bar{p}_{20}}{p_s}.
 \end{equation*}
From (\ref{eq:meane2eDelay}) and (\ref{eq:genFunLinear2Hop}), we have
\begin{equation}
\label{eq:meanDelayInterm2Hop}
D = \frac{1}{\lambda}  \left( \left.\frac{\partial g_{x_11}}{\partial x_1} \right|_{x_1=1} + 1 - g_{10} \right),
\end{equation}
where the first and second terms in the parentheses are the mean queue sizes at the source and relay buffers, respectively (the latter is equal to the 
probability that the buffer is not empty, since the buffer has size unity).
From (\ref{eq:mainFunEq2HopsProof0})-(\ref{eq:mainFunEq2HopsProof1}), with the help of de l'H\^{o}pital's rule, we determine 
$g_{10}$ and $\left.\frac{\md g_{x1}}{\md x} \right|_{x=1}$. After some algebra, we obtain (\ref{eq:meanDelay2Hop}).
Eq.~(\ref{eq:satThru2Hop}) follows from the definition of $\tau_s$.
\end{proof}

As seen in (\ref{eq:satThru2Hop}), the packet arrival rate which saturates the source buffer 
is given by $(p_{10}+p_{20}(1-p_{10}))/(2-p_{20})$. The expression clearly shows the gain of 
opportunistic routing ($p_{20}>0$), with respect to a protocol where two-hop transmissions are not allowed ($p_{20}=0$), in which case $\tau_s = p_{10}/2$.

\subsection{Three-hop network ($N=3$)}

Since $p_{11,n}$ is defined only for $n=1$, for simplicity we write $p_{11,1}=p_{11}$. The system generating function satisfies
\begin{align}
\label{eq:mainFunEqInitial}
 g_{x_1 x_2 x_3} &= f_{x_1} \left[ g_{000} + \frac{a_{x_3}}{x_3} (g_{x_1 x_2 x_3} - g_{x_1 x_2 0} - g_{x_1 0 x_3} + g_{x_1 0 0} + g_{0 0 x_3} - g_{000})  
                    + \frac{b_{x_2 x_3}}{x_2} (g_{x_1 x_2 0} - g_{x_1 0 0}) \right.  \nonumber \\
			     &+ \left. c_{x_1 x_2 x_3} (g_{x_1 0 0} - g_{000}) 
 			     + \frac{d_{x_1 x_2 x_3}}{x_3} (g_{x_1 0 x_3} - g_{x_1 0 0} - g_{0 0 x_3} + g_{000}) \right], 
\end{align}
where
\begin{align}
a_{x_3} &= p_{10,3} + \bar{p}_{10,3}x_3 \nn \\
b_{x_2 x_3} &= \bar{p}_{10,2}\bar{p}_{20,2}x_2 + p_{10,2}\bar{p}_{20,2}x_3 + p_{20,2} \nn \\
c_{x_1 x_2 x_3} &= \bar{p}_{10,1}\bar{p}_{20,1} + p_{10,1}\bar{p}_{20,1}x_2 x_1^{-1} + p_{20,1} x_3 x_1^{-1} \nn \\
d_{x_1 x_2 x_3} &= (p_{11}x_2 x_1^{-1} + \bar{p}_{11})(p_{10,3} + \bar{p}_{10,3}x_3). \label{eq:funcDef3Hop}
\end{align}
The different terms on the right hand side of (\ref{eq:mainFunEqInitial}) can be explained in a fashion similar to (\ref{eq:mainFunEq2HopsInitial}). 
The main difference between the two equations is the presence of the last term in (\ref{eq:mainFunEqInitial}), which captures the event of concurrent transmissions from the source to the first relay
and the second relay to the destination. In Proposition~\ref{prop:thru3Hop}, we derive the end-to-end delay and the saturation throughput of a three-hop network.
\begin{propos}
 \label{prop:thru3Hop}
For a three-hop network, the end-to-end delay is
\begin{equation}
 \label{eq:meanDelay3Hop}
D = \frac{1}{\lambda} \left( 2 + \frac{K''_{111}-2K'_{101}-2K'_{110}}{2K'_{111}} - \frac{K''_1}{2K'_1}  \right) 
\end{equation}
and the saturation throughput is 
\begin{equation}
\label{eq:satThru3HopGeneral}
 \tau_s =  \frac{a_0
\left|
\begin{array}{ccc}
b_{01}     & d_{101}-1   & c_{101}-1  \\
b_{00}     & d_{100}     & c_{100}-1  \\
0          & -d_{111}'    & -c_{111}' 
\end{array}
\right|
}{
\left|
\begin{array}{ccc}
b_{01}   & d_{101}-1   & c_{101}-1  \\
b_{10}-a_0-1  & 0  & c_{110}-a_0-1   \\
b_{00}    &  d_{100}   & c_{100}-1
\end{array}
\right|
},
\end{equation}
where 
\begin{align}
\label{eq:kx11Def}
 K_{x_111} &= \left|
\begin{array}{cccc}
C(d_{x_101}-1) + c_{x_101} -1    & b_{01} 			    & d_{x_101}-f_{x_1}^{-1}          &  c_{x_101} - f_{x_1}^{-1} \\
C(d_{x_110}-a_0) + c_{x_110} -1  & b_{10}-a_0 - f_{x_1}^{-1}        & d_{x_110}-a_0               &  c_{x_110} - a_0  - f_{x_1}^{-1}  \\
C(d_{x_100}-a_0) + c_{x_100} -1  & b_{00}			    & d_{x_100}                   &  c_{x_100} -f_{x_1}^{-1} \\
C(d_{x_111}-1) + c_{x_111} -1    & 0                            & d_{x_111}-1                 &  c_{x_111}-1
\end{array}
\right|
\end{align}
\begin{align}
\label{eq:kx01Def}
K_{x_101} &= \left|
\begin{array}{cccc}
0                            & b_{01}                       & C(d_{x_101}-1) + c_{x_101} -1      & c_{x_101}-d_{x_101}-b_{01}\\
a_0                          & b_{10}-a_0 - f_{x_1}^{-1}        & C(d_{x_110}-a_0) + c_{x_110} -1    & c_{x_110}-d_{x_110}+a_0-b_{10}\\
0                            & b_{00}                       & C(d_{x_100}-a_0) + c_{x_100} -1    & c_{x_100}-d_{x_100}-b_{00}-f_{x_1}^{-1}\\
1-f_{x_1}^{-1}                   & 0                            & C(d_{x_111}-1) + c_{x_111} -1      & c_{x_111}-d_{x_111}
\end{array}
\right|
\end{align}
\begin{align}
\label{eq:kx10Def}
 K_{x_110} &= \left|
\begin{array}{cccc}
0                            & C(d_{x_101}-1) + c_{x_101} -1    & d_{x_101}-f_{x_1}^{-1}          & c_{x_101}-d_{x_101}-b_{01} \\
a_0			     & C(d_{x_110}-a_0) + c_{x_110} -1  & d_{x_110}-a_0               & c_{x_110}-d_{x_110}+a_0-b_{10}\\
0                            & C(d_{x_100}-a_0) + c_{x_100} -1  & d_{x_100}                   & c_{x_100}-d_{x_100}-b_{00}-f_{x_1}^{-1}\\
1-f_{x_1}^{-1}                   & C(d_{x_111}-1) + c_{x_111} -1    & d_{x_111}-1                 & c_{x_111}-d_{x_111}
\end{array}
\right|
\end{align}
\begin{align}
\label{eq:kxDef}
K_{x_1} &= \left|
\begin{array}{cccc}
0                            & b_{01} 			    & d_{x_101}-f_{x_1}^{-1}          &  c_{x_101} - f_{x_1}^{-1} \\
a_0                          & b_{10}-a_0 - f_{x_1}^{-1}        & d_{x_110}-a_0               &  c_{x_110} - a_0  - f_{x_1}^{-1}  \\
0                            & b_{00}			    & d_{x_100}                   &  c_{x_100} -f_{x_1}^{-1} \\
1-f_{x_1}^{-1}                   & 0                            & d_{x_111}-1                 &  c_{x_111}-1
\end{array}
\right|
\end{align}
and all derivatives in (\ref{eq:meanDelay3Hop})-(\ref{eq:satThru3HopGeneral}) are taken with respect to $x_1$. The constant $C$ is defined as
\begin{equation}
\label{eq:CDef}
C \triangleq \frac{g_{001}}{g_{000}}-1.
\end{equation}
\end{propos}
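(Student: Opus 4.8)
The plan is to exploit the finite relay buffers exactly as in the two-hop proof. Since $Q_2(t),Q_3(t)\in\{0,1\}$, the generating function is multilinear in $x_2$ and $x_3$, so I would first write
\[
g_{x_1x_2x_3}=g_{x_100}+(g_{x_110}-g_{x_100})x_2+(g_{x_101}-g_{x_100})x_3+(g_{x_111}-g_{x_110}-g_{x_101}+g_{x_100})x_2x_3,
\]
which reduces all unknowns to the four single-variable functions $g_{x_100},g_{x_101},g_{x_110},g_{x_111}$ together with the two boundary constants $g_{000}$ and $g_{001}$; the latter enters only through $C$, since $g_{00x_3}=g_{000}(1+Cx_3)$. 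Substituting this form into the functional equation (\ref{eq:mainFunEqInitial}) and evaluating at the four corners $(x_2,x_3)\in\{0,1\}^2$ yields four linear equations whose coefficients are the corner values of $a,b,c,d$ and $f_{x_1}^{-1}$ from (\ref{eq:funcDef3Hop}) and whose right-hand sides carry $C$ through the $g_{00x_3}$ term.

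Solving this $4\times4$ system by Cramer's rule then produces each $g_{x_1ij}$ as the ratio of a numerator determinant to the common denominator determinant $K_{x_1}$ of (\ref{eq:kxDef}); the numerators are precisely $K_{x_111},K_{x_101},K_{x_110}$ of (\ref{eq:kx11Def})--(\ref{eq:kx10Def}), with $C$ appearing in the column that replaces the solved-for variable. Two side conditions pin down the remaining freedom. Normalization $g_{111}=1$ is imposed by letting $x_1\to1$: since $f_1=1$ forces both $K_{111}$ and $K_1$ to vanish, de l'H\^{o}pital gives the relation $K'_{111}=K'_1$ and fixes $g_{000}$. The constant $C$ is fixed by analyticity: each $g_{x_1ij}$ must remain finite on $[0,1]$, so any zero of the denominator $K_{x_1}$ inside the unit interval must be cancelled by the corresponding numerator, and because $C$ enters that numerator linearly this root-cancellation condition determines $C$ uniquely.

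With $g_{000}$ and $C$ in hand, the delay (\ref{eq:meanDelay3Hop}) follows from (\ref{eq:meane2eDelay}) by summing the three mean queue sizes. Multilinearity at $x_1=1$ gives $\mathbb{E}[Q_2]=1-g_{101}$ and $\mathbb{E}[Q_3]=1-g_{110}$, which account for the constant $2$; writing $g_{101}=K'_{101}/K'_{111}$ and $g_{110}=K'_{110}/K'_{111}$ via de l'H\^{o}pital (using $K'_1=K'_{111}$) produces the $-2K'_{101}-2K'_{110}$ terms. The source term $\mathbb{E}[Q_1]=\partial_{x_1}g_{x_111}|_{x_1=1}$ requires a second-order expansion of $K_{x_111}/K_{x_1}$ about $x_1=1$, where both vanish; this yields $\frac{K''_{111}}{2K'_{111}}-\frac{K''_1}{2K'_1}$ and completes (\ref{eq:meanDelay3Hop}). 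Finally, $\tau_s$ is the arrival rate at which $\mathbb{E}[Q_1]$ diverges, i.e. $K'_{111}=0$. Differentiating the $4\times4$ determinant $K_{x_111}$ at $x_1=1$ (where $f_1^{-1}=1$) and taking the saturation limit $g_{000}\to0$ makes the $C$-bearing entries degenerate, collapsing the condition to the ratio of $3\times3$ determinants in (\ref{eq:satThru3HopGeneral}), with the bottom row $[0,-d'_{111},-c'_{111}]$ coming from the $x_1$-differentiated column.

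The main obstacle I anticipate is the determinant bookkeeping: assembling the $4\times4$ coefficient matrix correctly at all four corners, tracking how $C$ threads through the replaced numerator columns, and—most delicately—carrying out the second-order l'H\^{o}pital expansion for $\mathbb{E}[Q_1]$ together with the saturation limit without sign or cofactor errors, so that the $4\times4$ objects reduce cleanly to the stated $3\times3$ throughput formula.
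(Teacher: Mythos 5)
Your skeleton is the same as the paper's: multilinearity in $x_2,x_3$, evaluation of the functional equation at the four corners $(x_2,x_3)\in\{0,1\}^2$, Cramer's rule with common denominator $K_{x_1}$, normalization via de l'H\^{o}pital at $x_1=1$, and the delay as the sum of the three mean queue sizes. However, two of your pivotal steps are incorrect. First, the normalization: with the right-hand sides written as in (\ref{eq:system1})--(\ref{eq:system4}) (where $g_{000}$ is factored out, which is what makes the numerators ``precisely'' the determinants (\ref{eq:kx11Def})--(\ref{eq:kx10Def})), Cramer's rule gives $g_{x11}=g_{000}K_{x11}/K_{x}$, so $g_{111}=1$ and de l'H\^{o}pital yield $g_{000}=K'_1/K'_{111}$ --- not the identity $K'_{111}=K'_1$ you state, which would force $g_{000}=1$, i.e. an almost-surely empty system. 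Your relay occupancies $g_{101}=K'_{101}/K'_{111}$ and $g_{110}=K'_{110}/K'_{111}$ are correct, but only because $g_{000}K'_{101}/K'_1=K'_{101}/K'_{111}$ after substituting the correct $g_{000}$; the justification ``using $K'_1=K'_{111}$'' is not valid.

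Second, the saturation throughput. Saturation is loss of ergodicity, $g_{000}=K'_1/K'_{111}\to 0$, i.e. $K'_1=0$, not $K'_{111}=0$. In fact $K'_{111}>0$ for all admissible parameters; establishing this requires showing that the determinants in (\ref{eq:derivK111}) are positive and that $C\ge 0$ (which follows from $g_{001}=\mathbb{P}(Q_1=Q_2=0)\ge g_{000}$) --- facts you never invoke, yet without them one cannot decide which factor's vanishing marks the divergence of the delay. Formula (\ref{eq:satThru3HopGeneral}) then comes from differentiating the $C$-free coefficient determinant $K_{x_1}$ of (\ref{eq:kxDef}) at $x_1=1$: as in (\ref{eq:derivK1}), $K'_1$ splits as $a_0$ times one $3\times3$ determinant minus $\lambda$ times another, and solving $K'_1=0$ for $\lambda$ gives the stated ratio directly. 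Your route --- differentiating the $C$-bearing numerator $K_{x_111}$ and appealing to a ``$g_{000}\to 0$ degeneration'' of its $C$-entries --- has no actual mechanism behind it. Finally, your claim that analyticity (root cancellation in $K_{x_1}$) determines $C$ uniquely is asserted, not proved: you would need to exhibit a zero of $K_{x_1}$ in the relevant domain and show the cancellation is a nondegenerate linear condition in $C$. The proposition does not require determining $C$ at all --- it is carried as the unknown constant (\ref{eq:CDef}) --- and the paper states explicitly that $C$ could not be found analytically, only approximated as in (\ref{eq:Capprox}).
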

\begin{proof}
Since the size of the relay buffers is unity, we have
\begin{align}
 \label{eq:linearityRelayBuffers}
 g_{x_1 x_2 x_3} &= g_{x_1 x_2 0} + \left(g_{x_1 x_2 1} - g_{x_1 x_2 0} \right)x_3 \nn \\
 g_{x_1 x_2 x_3} &= g_{x_1 0 x_3} + \left(g_{x_1 1 x_3} - g_{x_1 0 x_3} \right)x_2.
\end{align}
Therefore, (\ref{eq:mainFunEqInitial}) becomes
\begin{align}
\label{eq:mainFunEq}
 g_{x_1 x_2 x_3} &= f_{x_1} \left[ g_{000} + a_{x_3} (g_{x_1 x_2 1} - g_{x_1 x_2 0} - g_{x_1 0 1} + g_{x_1 0 0} + g_{0 0 1} - g_{000}) 
+ b_{x_2 x_3} (g_{x_1 1 0} - g_{x_1 0 0}) \right. \nonumber \\
			     &+ \left. c_{x_1 x_2 x_3} (g_{x_1 0 0} - g_{000}) 
 			     + d_{x_1 x_2 x_3} (g_{x_1 0 1} - g_{x_1 0 0} - g_{0 0 1} + g_{000}) \right].
\end{align}
Setting $(x_2,x_3) = \{(0,0),(0,1),(1,0),(1,1)\}$ in (\ref{eq:mainFunEq}) and $x_1=x$ (to make the notation easier), we obtain
\begin{align}
   &b_{01} g_{x10} + (d_{x01}-f_x^{-1}) g_{x01} + (-b_{01} + c_{x01} - d_{x01})g_{x00} = (-C-1 + c_{x01} + Cd_{x01})g_{000} \label{eq:system1} \\
   &a_0g_{x11} + (-a_0 - f_x^{-1} + b_{10}) g_{x10} + (-a_0 + d_{x10}) g_{x01} + (a_0 - b_{10} + c_{x10} - d_{x10}) g_{x00} = \nn \\ 
   &(-1-a_0C + c_{x10} + Cd_{x10})g_{000}  \label{eq:system2} \\   
   & b_{00}g_{x10} + d_{x00}g_{x01} + (-f_x^{-1} - b_{00} + c_{x00} - d_{x00})g_{x00} = (-1-a_0C+c_{x00} +Cd_{x00})g_{000} \label{eq:system3} \\
   & (1-f_x^{-1})g_{x11} + (-1+d_{x11})g_{x01} + (c_{x11}-d_{x11})g_{x00} = (-C-1 +c_{x11} +Cd_{x11})g_{000} \label{eq:system4},   
\end{align}
where the constant $C$ is defined in (\ref{eq:CDef}). Solving (\ref{eq:system1})-(\ref{eq:system4}) over $g_{x11},g_{x10},g_{x01},g_{x00}$, we obtain
\begin{equation}
\label{eq:funsOfK}
 g_{x11}=g_{000}\frac{K_{x11}}{K_x},\ g_{x10}=g_{000}\frac{K_{x10}}{K_x},\ g_{x01}=g_{000}\frac{K_{x01}}{K_x},\ g_{x00}=g_{000}\frac{K_{x00}}{K_x},
\end{equation}
where $K_{x11},K_{x10},K_{x01},K_{x00}$ are defined in (\ref{eq:kxDef}).

By the law of total probability, $g_{111}=1$. Therefore, taking the limit of $g_{x11}$ for $x \to 1$ and
applying de l'H\^{o}pital's rule, we have 
\begin{equation}
 \label{eq:g000FunOfK}
g_{000} = \frac{K_1'}{K_{111}'},
\end{equation}
where
\begin{equation}
\label{eq:derivK1}
K_1' =  a_0\left|
\begin{array}{ccc}
b_{01}     & d_{101}-1   & c_{101}-1  \\
b_{00}     & d_{100}     & c_{100}-1  \\
0          & -d_{111}'    & -c_{111}' 
\end{array}
\right|
- \lambda
\left|
\begin{array}{ccc}
b_{01}   & d_{101}-1   & c_{101}-1  \\
b_{10}-a_0-1  & 0  & c_{110}-a_0-1   \\
b_{00}    &  d_{100}   & c_{100}-1
\end{array}
\right|
\end{equation}
and
\begin{equation}
 \label{eq:derivK111}
 K_{111}' = a_0 
\left|
\begin{array}{cc}
b_{01}  & d_{111}'(c_{101}-1)-c_{111}'(d_{101}-1)  \\
b_{00}  & d_{111}'(c_{100}-1)-c_{111}'d_{100} 
\end{array}
\right| +
a_0 C
\left|
\begin{array}{cc}
b_{01}  & d_{111}'(c_{101}-1)-c_{111}'(d_{101}-1)  \\
b_{10}-a_0-1  & d_{111}'(c_{110}-a_0-1)
\end{array}
\right|
\end{equation}
From the definitions in (\ref{eq:funcDef3Hop}), it is straightforward to show that all the determinants in (\ref{eq:derivK1})-(\ref{eq:derivK111}) are positive.
Moreover, $C \ge 0$, since $g_{001} = \mbb{P}(Q_1(t)=Q_2(t)=0) \ge \mbb{P}(Q_1(t)=Q_2(t)=Q_3(t)=0)=g_{000}$.
The condition of ergodicity of the Markov chain (i.e., finite delay) is $g_{000}>0$~\cite{Sidi87}, from which (\ref{eq:satThru3HopGeneral}) follows.

We now compute the end-to-end delay. Successively applying de l'H\^{o}pital's rule, and recalling (\ref{eq:g000FunOfK}),
the mean queue size at the source is found to be
\begin{equation}
 \label{eq:meanQueueSourceK}
 \left. \frac{\md g_{x11}}{\md x} \right|_{x=1} = \frac{K_{111}''}{2K_{111}'} - \frac{K_1''}{2K_1'}.
\end{equation}
Moreover, recalling (\ref{eq:linearityRelayBuffers}), the mean queue sizes at the first and second relays (or, equivalently, the busy probabilities since
the size of the buffers is unity) are
\begin{align}
 \label{eq:meanQueueRelaysK}
 \left. \frac{\md g_{1x_21}}{\md x_2} \right|_{x_2=1} = 1-g_{101} = 1 - g_{000}\frac{K'_{101}}{K'_1} = 1 - \frac{K'_{101}}{K'_{111}} \nn \\
 \left. \frac{\md g_{11x_3}}{\md x_3} \right|_{x_3=1} = 1-g_{110} = 1 - g_{000}\frac{K'_{110}}{K'_1} = 1 - \frac{K'_{110}}{K'_{111}} \nn \\
\end{align}
From (\ref{eq:meane2eDelay}) and (\ref{eq:meanQueueSourceK})-(\ref{eq:meanQueueRelaysK}), (\ref{eq:meanDelay3Hop}) follows.
\end{proof}

In the particular case of a symmetrical system, i.e., $p_{10,1}=p_{10,2}=p_{10,3}=p_{10}$ and $p_{20,1}=p_{20,2}=p_{20}$, the expression for the saturation throughput
given in (\ref{eq:satThru3HopGeneral}) simplifies considerably. The result is stated in the following proposition.

\begin{propos}
\label{prop:thru3HopSym}
The saturation throughput of a symmetrical three-hop system is 
\begin{equation}
\label{eq:satThru3Hop}
\tau_s = u(p_{10},p_{20},p_{11})v(p_{10},p_{20},p_{11})^{-1}, 
\end{equation}
where
\begin{align*} 
u(p_{10},p_{20},p_{11}) &= p_{10} (p_s^2p_{11}+p_s^2p_{10}\bar{p}_{11}+p_{20}^2p_{11}) \\
v(p_{10},p_{20},p_{11}) &=  p_{10}(1+\bar{p}_{20})(p_sp_{11}+p_{10}^2\bar{p}_{20}\bar{p}_{11}) 
+ (p_{10}+p_{20})(p_sp_{10}\bar{p}_{11}+p_{20}p_{11}). \nn 
\end{align*}
\end{propos}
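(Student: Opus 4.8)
The plan is to specialize the general saturation-throughput formula (\ref{eq:satThru3HopGeneral}) to the symmetric parameters by substituting the explicit coefficient values of (\ref{eq:funcDef3Hop}). Setting $p_{10,1}=p_{10,2}=p_{10,3}=p_{10}$, $p_{20,1}=p_{20,2}=p_{20}$ (with $p_{20,3}=0$) and $p_{11,1}=p_{11}$, and recalling that all of $c$, $d$ and their primes are evaluated at $x_1=1$, I would first compute the individual entries appearing in the two $3\times 3$ determinants. For the $a,b$ coefficients this gives $a_0=p_{10}$, $b_{00}=p_{20}$, $b_{01}=p_{10}\bar{p}_{20}+p_{20}$ and $b_{10}=\bar{p}_{10}\bar{p}_{20}+p_{20}$; for the $c,d$ coefficients, $c_{100}=\bar{p}_{10}\bar{p}_{20}$, $c_{101}=\bar{p}_{10}\bar{p}_{20}+p_{20}$, $c_{110}=\bar{p}_{20}$, $c_{111}=1$, $d_{100}=p_{10}\bar{p}_{11}$, $d_{101}=\bar{p}_{11}$, $d_{111}=1$; and for the derivatives, $c_{111}'=-(p_{10}\bar{p}_{20}+p_{20})$ and $d_{111}'=-p_{11}$.

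Second, I would record the identities that make these entries collapse. The key one is $p_{10}\bar{p}_{20}+p_{20}=p_{10}+\bar{p}_{10}p_{20}=p_s$, which immediately yields $b_{01}=p_s$, $c_{101}-1=-p_{10}\bar{p}_{20}$, $c_{100}-1=-p_s$ and $c_{111}'=-p_s$, together with $d_{101}-1=-p_{11}$. In the denominator the mixed entries simplify to $b_{10}-a_0-1=-p_{10}(1+\bar{p}_{20})$ and $c_{110}-a_0-1=-(p_{10}+p_{20})$. After these substitutions, both the numerator and the denominator of (\ref{eq:satThru3HopGeneral}) reduce to $3\times 3$ determinants whose entries are low-degree monomials in $p_{10},p_{20},p_{11}$ and their complements; in fact the two determinants share the rows $(p_s,\,-p_{11},\,-p_{10}\bar{p}_{20})$ and $(p_{20},\,p_{10}\bar{p}_{11},\,-p_s)$, which is a useful consistency check on the substitutions.

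Finally, I would evaluate the two determinants. Each conveniently contains a zero entry — the lower-left corner of the numerator determinant and the centre of the middle row of the denominator determinant — so a single cofactor expansion reduces each to a pair of $2\times 2$ minors. Expanding the numerator and restoring the prefactor $a_0=p_{10}$ should give $p_{10}(p_s^2 p_{11}+p_s^2 p_{10}\bar{p}_{11}+p_{20}^2 p_{11})=u$, the $p_{20}^2 p_{11}$ term arising precisely because $p_{10}\bar{p}_{20}-p_s=-p_{20}$; expanding the denominator should yield the four terms that regroup as $p_{10}(1+\bar{p}_{20})(p_s p_{11}+p_{10}^2\bar{p}_{20}\bar{p}_{11})+(p_{10}+p_{20})(p_s p_{10}\bar{p}_{11}+p_{20}p_{11})=v$. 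I expect the only real difficulty to be the bookkeeping in this last step: the raw expansions produce several scattered monomials that must be consolidated with the $p_s$-identities into the stated factored forms, and the signs must be tracked carefully so that the two expressions come out positive, consistent with the positivity of these determinants already noted in the proof of Proposition~\ref{prop:thru3Hop}.
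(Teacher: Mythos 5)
Your proposal is correct and follows exactly the paper's route: the paper's own proof is the one-line statement that the result ``follows directly from (\ref{eq:satThru3HopGeneral})'' by substituting the symmetric parameters, and your computation simply carries out that substitution in full detail (your entry values, the identities $p_{10}\bar p_{20}+p_{20}=p_s$, $\bar p_{10}\bar p_{20}=1-p_s$, $p_{10}\bar p_{20}-p_s=-p_{20}$, and the cofactor expansions all check out, yielding exactly $u$ and $v$). No gaps; you have merely made explicit what the paper leaves to the reader.
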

\begin{proof}
Follows directly from (\ref{eq:satThru3HopGeneral}) by setting  $p_{10,1}=p_{10,2}=p_{10,3}=p_{10}$ and $p_{20,1}=p_{20,2}=p_{20}$.
\end{proof}

Eq.~(\ref{eq:satThru3Hop}) is amenable to interpretation for particular values of the parameters $p_{10},p_{20},p_{11}$.
For example, letting $p_{20}=0$, i.e., not allowing two-hop transmissions, yields
\begin{equation*}
 \tau_s = \frac{p_{10}}{2+\frac{p_{10}\bar{p}_{11}}{p_{11}+ p_{10}\bar{p}_{11}}}.
\end{equation*}
For $p_{11} > 0$, the denominator is $<3$, which reflects the gain with respect to a system where intra-route spatial reuse
is not permitted ($p_{11}=0$ and $\tau_s = p_{10}/3$).

The derivation of a closed-form expression for $D$ from (\ref{eq:meanDelay3Hop}) requires the constant $C$.
We were not able to determine $C$ analytically, but an approximation may be obtained as follows. 
Considering a symmetrical system\footnote{The approximation may be obtained easily for a non-symmetrical system as well.}, and setting $x_1=x_2=x_3=0$ in (\ref{eq:mainFunEq}) gives
\begin{equation*}
 p_{10}g_{001} + p_{20}g_{010} = (-1+\bar{\lambda}^{-1}+p_{10}+p_{20})g_{000}.
\end{equation*}
Letting $g_{001} \approx g_{010}$, we have
\begin{equation}
\label{eq:Capprox}
 C \approx \frac{\lambda}{\bar{\lambda}(p_{10} + p_{20})}.
\end{equation}
Since $g_{001}=\mbb{P}(Q_1(t)=Q_2(t)=0)$ and $g_{010}=\mbb{P}(Q_1(t)=Q_3(t)=0)$, we are approximating the probabilities that
the first two queues are empty, and that the first and third
queues are empty, as equal. Note that (\ref{eq:Capprox}) is proportional to $\lambda$. This is reasonable, since, for $\lambda \to 0$,
$\mbb{P}(Q_1(t)=Q_2(t)=0,Q_3(t)=1) \to 0$ or $g_{001} \to g_{000}$. The accuracy of the approximation is demonstrated with numerical results in the following section.

\section{Numerical results}
\label{Sec:numRes}

We initially present numerical results for symmetrical two and three hop systems (Figs.~\ref{fig:2hop}-\ref{fig:5hop}) and in Fig.~\ref{fig:3hopOptPos}, we examine a non-symmetrical
three-hop system. The considered channel model consists of path-loss $r^{-\alpha}$ at distance $r$, where $\alpha$ is the path-loss exponent, and fading $h$
which is constant within a slot, and spatially and temporally independent. We assume that $h$ is exponentially distributed with mean one (i.e., $\sqrt{h}$ 
is Rayleigh distributed). The (instantaneous) received power is $P_r = P r^{-\alpha}h$, where $P$ is the transmit power, assumed common for all nodes. We define the average received SNR over a single hop as $\gamma = Pr^{-\alpha}/N_0$, where $N_0$ is the thermal noise power. Assuming that a packet is successfully received if the received signal-to-interference-and-noise ratio (SINR) is larger than a threshold $\theta$, the probabilities $p_{10},p_{20},p_{11}$ defined in Section~\ref{Sec:sysMod} are
\begin{equation*}
 p_{10} = \me^{-\theta/\gamma},\ p_{20} = \me^{-2^{\alpha}\theta/\gamma},\ p_{11} = p_{10}/(1+\theta).
\end{equation*}

Apart from the considered ``smart'' opportunistic protocol (S-OPP), described in Section~\ref{Sec:sysMod}, for comparison purposes we consider the following two protocols.

\noindent
{\em Multi-hop (MH):} packets can only be transmitted over a single hop and nodes are divided in groups based on their spatial separation $d=1,\dots,N$ (in hops). 
In each slot, all nodes in a group can transmit simultaneously, and, across slots, a TDMA schedule is followed amongst the groups. If $d =1$, all nodes can transmit in a given slot (full spatial reuse), while, if $d=N$, MH becomes a pure TDMA (round-robin) protocol. 

\noindent 
{\em Regular opportunistic (OPP):} The only difference between OPP and S-OPP is that if a node has a packet in its queue, it transmits, independently of the queue state of the 
successive node.

Note that, in terms of feedback, MH only requires that the transmitter know whether its successive node successfully received the packet. 
In general, OPP and S-OPP require a more refined feedback, since a transmission has multiple potential receivers and all of them have to be informed of the outcome. This can be accomplished
within a separate feedback slot, where, in a round-robin fashion, each node in the network (excluding the source) declares if it successfully received a packet and from which node.
On the other hand, OPP and S-OPP do not require the scheduling of packet transmissions on which MH is based.

For each protocol, we determine via simulation the average delay of the packets that arrive at the destination over a period of $10^6$ slots.
In the simulations, we relax assumptions (A1)-(A3), allowing
for direct transmissions over distances exceeding two hops (if SINR $> \theta$ is satisfied), taking into account interference from all transmitting nodes, and letting the relay buffer size $B_r \ge 1$. The implication of $B_r>1$ is that a relay which has a packet in its buffer at time $t$, may receive a packet in slot $t$ if it is silent. 
Unless otherwise stated, $\alpha=3$, $\gamma = 8$~dB, $\theta=3$~dB and $B_s = B_r = 50$, where $B_s$ denotes the source buffer size.

In Figs.~\ref{fig:2hop}-\ref{fig:3hop}, the delay is plotted vs. $\lambda$ for a two and a three hop system, respectively. Expectedly, S-OPP outperforms OPP and MH (for all possible $d$).
Under little traffic, it is as aggressive as OPP, harnessing good fading conditions to perform direct two-hop transmissions. Under high traffic, it still behaves opportunistically, but avoids causing unnecessary interference, yielding a throughput gain of about 10$\%$ with respect to OPP under saturation. In fact, Fig.~\ref{fig:2hop} depicts nicely how OPP suffers from interference for high traffic, resulting in larger delay than MH for $\lambda>0.33$. Note that the analytical approximation of the delay in Fig.~\ref{fig:3hop} is satisfactory for all $\lambda$.

In Figs.~\ref{fig:4hop}-\ref{fig:5hop}, the simulated delay is plotted vs. $\lambda$ for four and five hop networks. The MH curves are obtained by selecting the delay optimal $d$ for each $\lambda$ (which is $d=1$ for the given system parameter values). For $\gamma=5,10$~dB, the maximum throughput of S-OPP is about 15$\%$ and 10$\%$ larger than OPP, respectively.
For $\gamma=5$~dB in particular, the curves of OPP and MH are overlapping, due to the fact that two-hop transmissions are very rare. This implies that the gain of S-OPP with respect to OPP 
results only from the smart interference management. Another interesting observation is that the performance of S-OPP is insensitive to $B_r$ (seen by the light lines which correspond to $B_r=1$). The reason is that the events where three or more consecutive nodes have packets to transmit are quite rare for $\lambda$ smaller than the saturation throughput; therefore a relay buffer size larger than unity does not result in a notable end-to-end delay benefit.

Closing the paper, we consider the performance of S-OPP in a three-hop system with {\em non-equidistant} relays. In Fig.~\ref{fig:3hopOptPos}, we plot the relay positions 
that maximize $\tau_s$ as a function of $\gamma$, and compare them with the respective ones obtained via simulation of a saturated system.
Note that $\gamma$ in the non-symmetrical case is defined as $\gamma = 3^{\alpha}\gamma_{\mr{tot}}$, where $\gamma_{\mr{tot}}$ is the end-to-end receive SNR.
For normalization purposes, we set the source-destination distance to unity. 
If $r_1,r_2 \in (0,1)$, $r_1<r_2$, denote the distances of the first and second relays from the source, the probabilities defined in Section~\ref{Sec:sysMod} are given by
\begin{align}
 p_{10,1} &= \exp \left( -\frac{(3r_1)^{\alpha}\theta}{\gamma} \right),\ p_{20,1} = \exp \left( {-\frac{(3r_2)^{\alpha}\theta}{\gamma}} \right) \nn \\
 p_{10,2} &= \exp \left( -\frac{(3(r_2-r_1))^{\alpha}\theta}{\gamma}\right),\ p_{20,2} = \exp \left(-\frac{(3(1-r_1))^{\alpha}\theta}{\gamma} \right) \nn \\
 p_{10,3} &= \exp \left( -\frac{(3(1-r_2))^{\alpha}\theta}{\gamma} \right) \nn \\
 p_{11} &= \frac{p_{10,1}}{1 + \theta \left(\frac{r_1}{r_2-r_1} \right)^{\alpha}} \label{eq:probs3HopGeneral}
\end{align}
and $\tau_s$ is obtained by  
\begin{equation}
\label{eq:satThru3HopCompicated}
 \tau_s =  \frac{p_{10,3}
\left|
\begin{array}{ccc}
p_{10,2}\bar{p}_{20,2}+p_{20,2}     & -p_{11}   & \bar{p}_{10,1}\bar{p}_{20,1}+p_{20,1}-1  \\
p_{20,2}     & \bar{p}_{11}p_{10,3}     & \bar{p}_{10,1}\bar{p}_{20,1}-1  \\
0          & p_{11}    & p_{10,1}\bar{p}_{20,1}+p_{20,1} 
\end{array}
\right|
}{
\left|
\begin{array}{ccc}
p_{10,2}\bar{p}_{20,2}+p_{20,2}     & -p_{11}   & \bar{p}_{10,1}\bar{p}_{20,1}+p_{20,1}-1 \\
\bar{p}_{10,2}\bar{p}_{20,2}+p_{20,2}-p_{10,3}-1  & 0  & \bar{p}_{20,1}-p_{10,3}-1   \\
p_{20,2}    &  \bar{p}_{11}p_{10,3}   & \bar{p}_{10,1}\bar{p}_{20,1}-1
\end{array}
\right|
}. 
\end{equation}
$\tau_s$ is evaluated as a function of $\gamma$ by substituting (\ref{eq:probs3HopGeneral}) in (\ref{eq:satThru3HopCompicated}).
The discrepancy of the theoretical and simulated curves for $\gamma >10$~dB observed in Fig.~\ref{fig:3hopOptPos} is due to the fact that, in the simulated system, the
destination can be reached directly from the source with positive probability, which is not taken into account in the analysis. 
Focusing on the more realistic SNR range $6-10$~dB, the main conclusion drawn from 
Fig.~\ref{fig:3hopOptPos} is that, under normal S-OPP operation, it is advantageous to move the first relay slightly closer to the destination than 0.33. This position achieves the best tradeoff between reducing interference from the second relay and advancement towards the destination. This can be confirmed by the curves obtained when either two-hop transmissions or intra-route reuse are forbidden.

\bibliographystyle{IEEEtran}
\bibliography{IEEEabrv,References}

\newpage

\begin{figure}
	\centering
	\includegraphics[width = 0.9\figw]{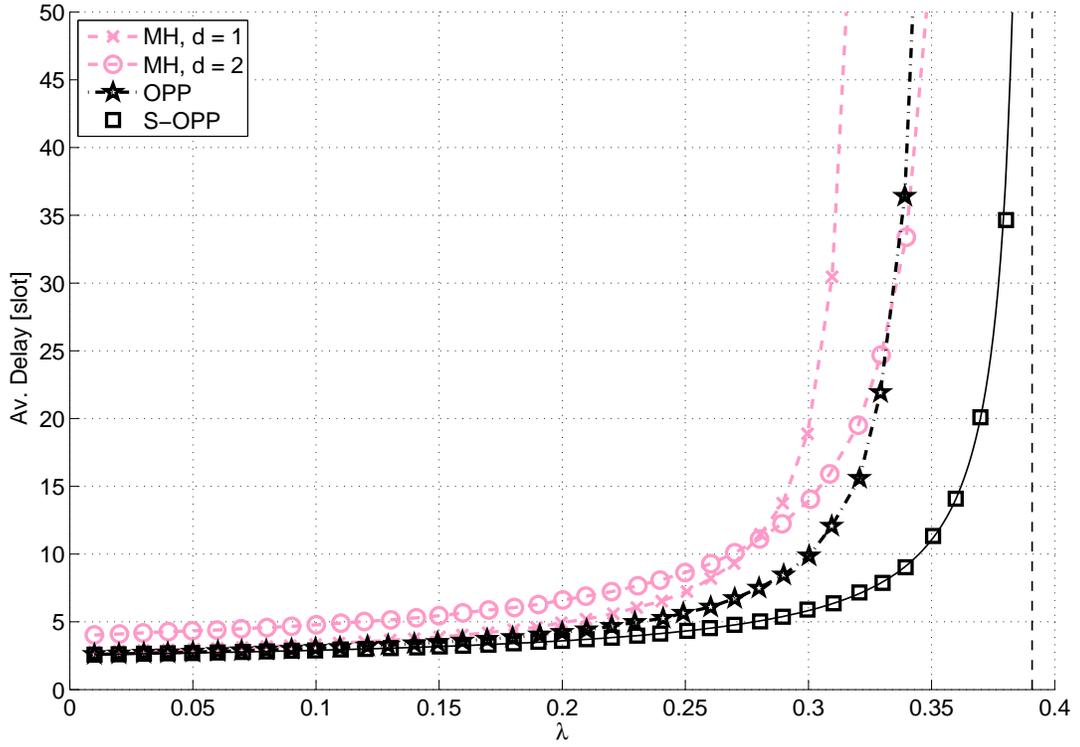}
	\caption{End-to-end delay versus $\lambda$ for a two-hop system. The solid line corresponds to $D$ (\ref{eq:meanDelay2Hop}) and the dotted vertical line to $\tau_s$ (\ref{eq:satThru2Hop}). Note that the simulation markers for S-OPP lie exactly on the theoretical curve. ($\alpha=3$, $\gamma = 8$~dB, $\theta=3$~dB, $B_s = B_r = 50$)}
	\label{fig:2hop}
\end{figure}

\begin{figure}
	\centering
	\includegraphics[width = 0.9\figw]{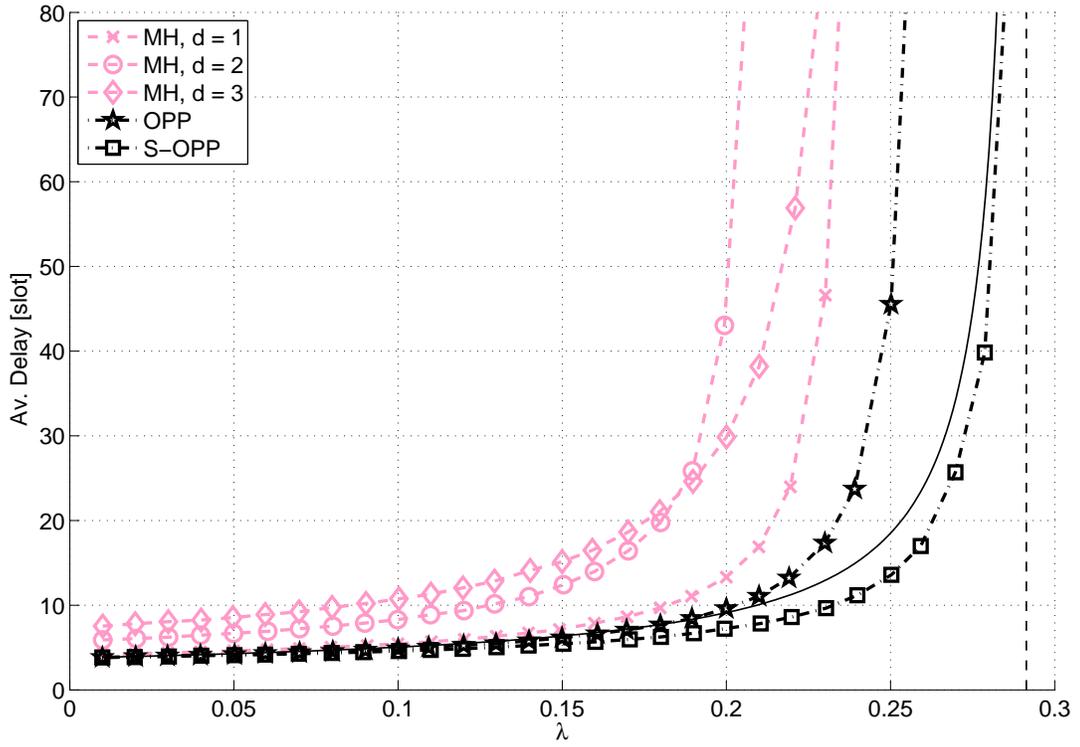}
	\caption{End-to-end delay versus $\lambda$ for a three-hop system. The solid line corresponds to the analytical approximation of $D$ and the dotted vertical line to $\tau_s$ (\ref{eq:satThru3Hop}). ($\alpha=3$, $\gamma = 8$~dB, $\theta=3$~dB, $B_s = B_r = 50$)}
	\label{fig:3hop}
\end{figure}

\begin{figure}
	\centering
	\includegraphics[width = 0.9\figw]{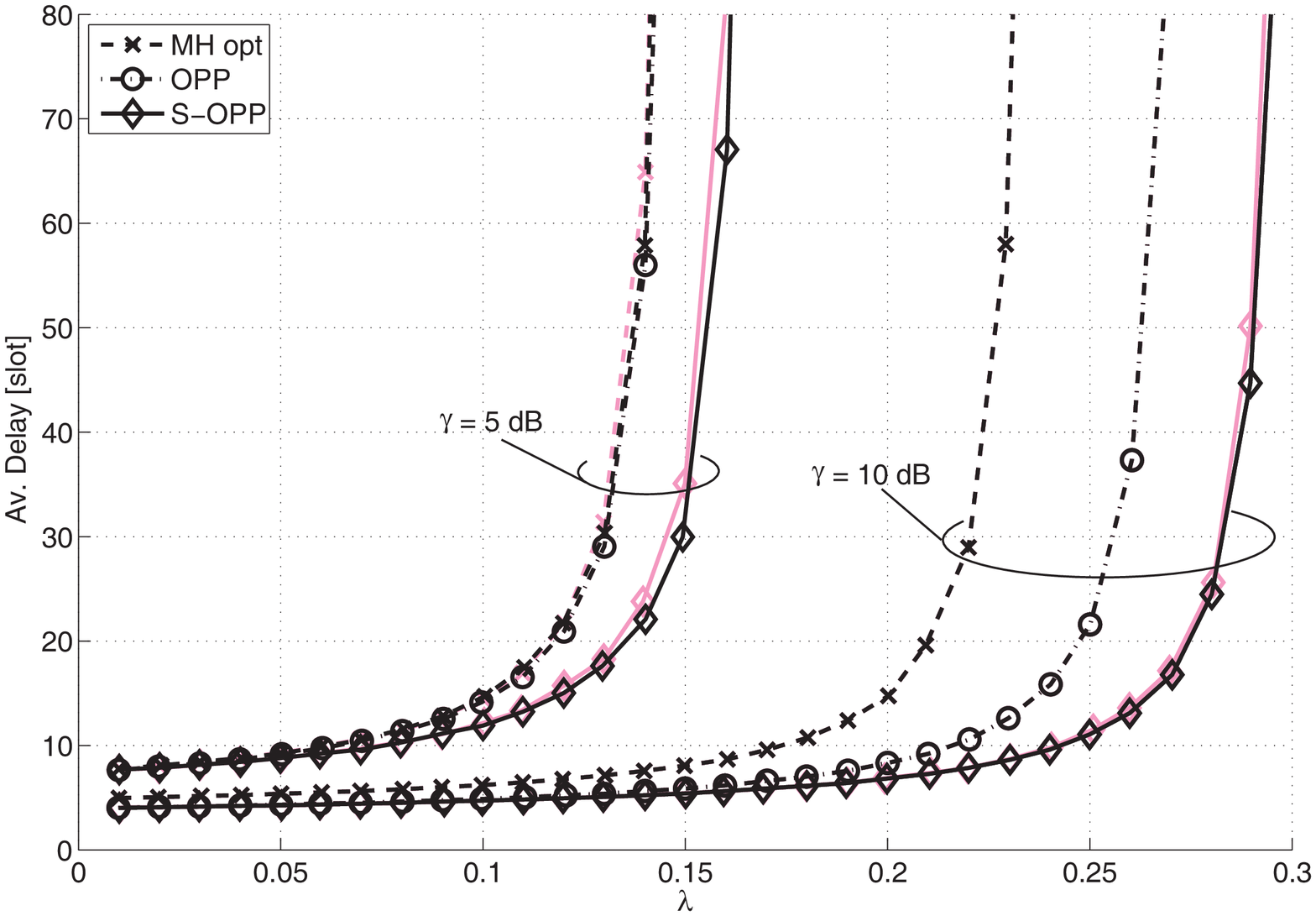}
	\caption{End-to-end delay versus $\lambda$ for a four-hop system and $\gamma=5,10$~dB. The light lines correspond to $B_r=1$ and the dark lines to $B_r=50$. 
                ($\alpha=3$, $\theta=3$~dB, $B_s=50$)}
	\label{fig:4hop}
\end{figure}

\begin{figure}
	\centering
	\includegraphics[width = 0.9\figw]{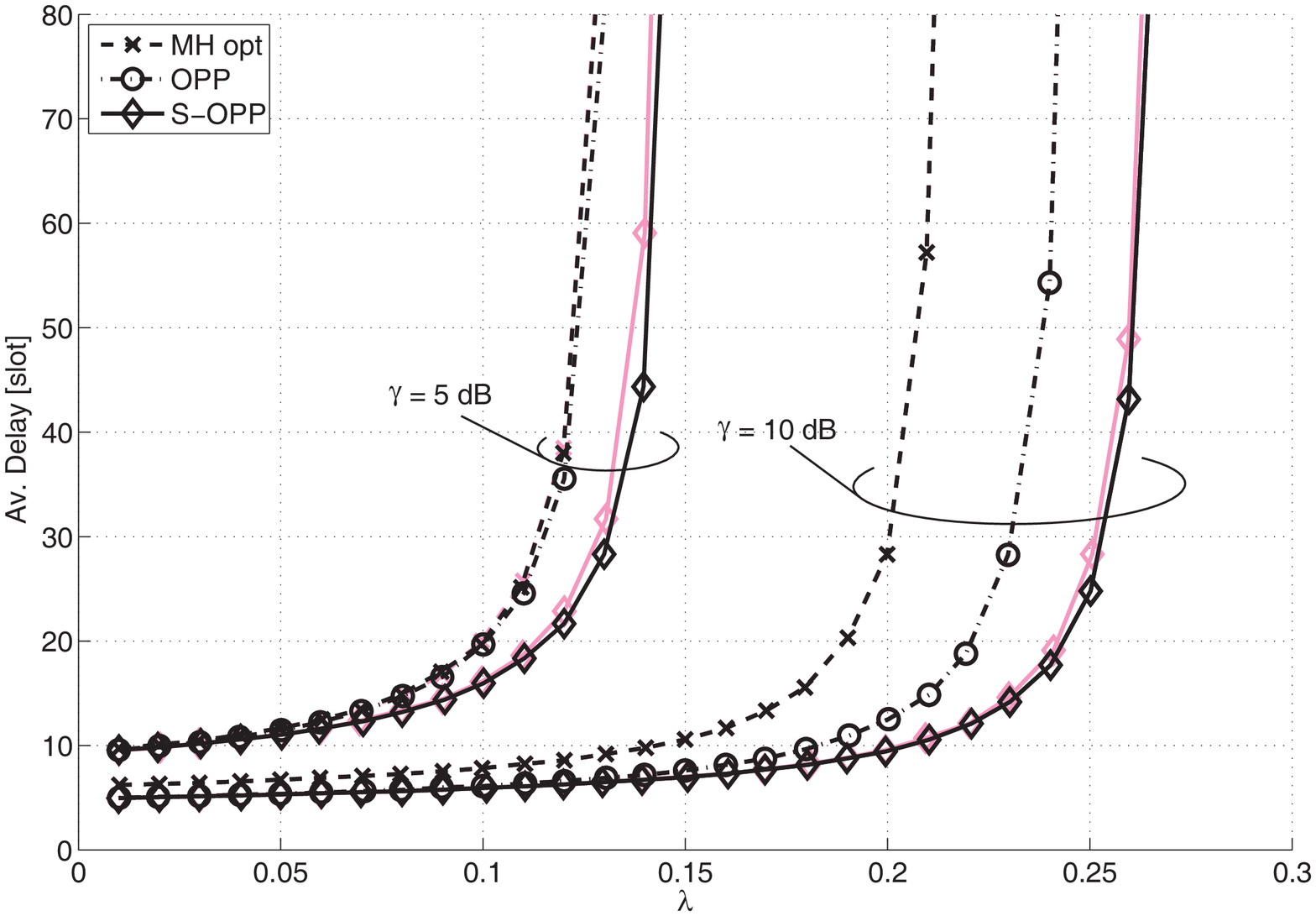}
	\caption{End-to-end delay versus $\lambda$ for a five-hop system and $\gamma=5,10$~dB. The light lines correspond to $B_r=1$ and the dark lines to $B_r=50$.
                ($\alpha=3$, $\theta=3$~dB, $B_s=50$)}
	\label{fig:5hop}
\end{figure}

\begin{figure}
	\centering
	\psfrag{p_{20}=0}{No two-hop}
	\includegraphics[width = 0.9\figw]{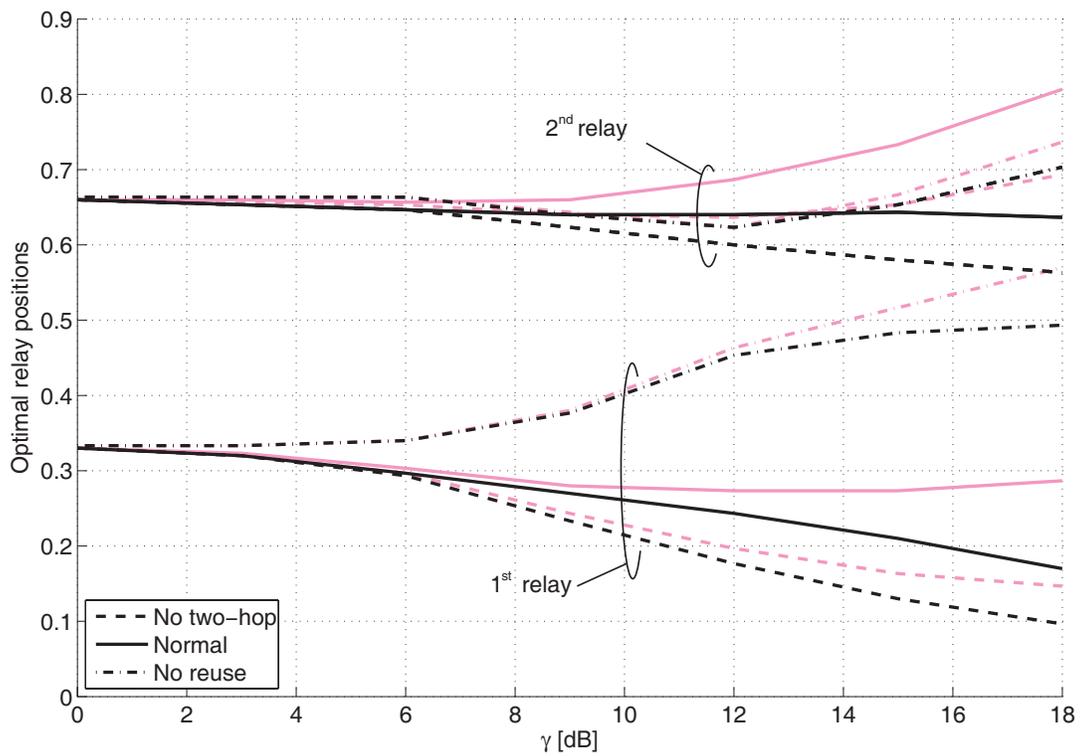}
	\caption{Throughput-optimal relay positions (normalized to unity) for a three-hop S-OPP system vs. equivalent link-SNR $\gamma$. 
         The dark lines are obtained by maximizing the theoretical saturation throughput and the light lines by simulation.
         ($\alpha=3$, $\theta=3$~dB, $B_s=B_r=50$)}
	\label{fig:3hopOptPos}
\end{figure}

\end{document}